\newtheorem{lemma}{Lemma}
\newtheorem{thm}{Theorem}[section]
\newtheorem{proposition}{Proposition}
\newtheorem{rmk}{Remark}[section]
\newtheorem{exm}{Example}[section]
\newcommand{\res}{\mathrm{res}}
\newcommand{\e}{\mathrm{e}}
 \numberwithin{equation}{section}
\numberwithin{equation}{section}
\newcommand{\pa}{\partial}
\newcommand{\ld}{\lambda}
\title{Bilinear identities for an extended B-type Kadomtsev-Petviashvili hierarchy}
\author{Runliang Lin $^{a}$ \footnote{corresponding author, e-mail: rlin@math.tsinghua.edu.cn},
Tiancheng Cao $^{a}$ \footnote{e-mail: ctc12@mails.tsinghua.edu.cn},
Xiaojun Liu $^{b}$ \footnote{e-mail: xjliu@cau.edu.cn}
and Yunbo Zeng $^{a}$ \footnote{e-mail: yzeng@math.tsinghua.edu.cn}}
\date{}
\begin{document}

\maketitle

$^{a}${\small\textit{Department of Mathematical Sciences, Tsinghua University, Beijing 100084, P.R. China}}

$^{b}${\small\textit{Department of
  Applied Mathematics, China Agricultural University, Beijing 100083, P.R. China}}

\begin{abstract}
In this paper, we construct the bilinear identities for the wave functions of an extended B-type Kadomtsev-Petviashvili (BKP) hierarchy, which contains two types of (2+1)-dimensional Sawada-Kotera equation with a self-consistent source (2d-SKwS-I and 2d-SKwS-II). By introducing an auxiliary variable corresponding to the extended flow for the BKP hierarchy, we find the tau-function and the bilinear identities for this extended BKP hierarchy. The bilinear identities can generate all the Hirota's bilinear equations for the zero-curvature forms of this extended BKP hierarchy.
As examples, the Hirota's bilinear equations for the two types of 2d-SKwS (both 2d-SKwS-I and  2d-SKwS-II) will be given explicitly.
\end{abstract}

\medskip\noindent
{\bf Key words:} BKP hierarchy; self-consistent source; bilinear identity; tau-function; Hirota's bilinear form

\medskip\noindent
{\it 2010 Mathematics Subject Classification: 35Q51, 37K10\\
PACS: 02.30.Ik}

%%% ----------------------------------------------------------------------
\maketitle
%%% ----------------------------------------------------------------------
%\tableofcontents

\

\centerline{(Published in: ``Theoretical and Mathematical Physics, 186(3) (2016) 307--319'')}

\section{Introduction}
The well-known Kadomtsev-Petviashvili hierarchy \cite{1983-Date-Jimbo-etal-BKP,Dickey} is an infinite-dimensional system of nonlinear partial differential equations, which contains various types of soliton equations. The B-type Kadomtsev-Petviashvili hierarchy \cite{DateRes,1983-Date-Jimbo-etal-BKP} (or BKP hierarchy for short) is a sub-hierarchy for the KP system. The BKP hierarchy possesses many integrable structures (see, e.g., \cite{ChengHeHu2010,1983-Date-Jimbo-etal-BKP,Shen} and the references therein), such as Lax formulation, $\tau$-function, Hirota's bilinear equations, etc.

Sato's theory has fundamental importance in the study of integrable systems.
It reveals the infinite dimensional Grassmannian structure of the space of $\tau$-functions, where the $\tau$-functions are solutions to the Hirota's bilinear form of KP hierarchy. Bilinear identity, which is a bilinear residue identity for wave and adjoint wave functions, plays an important role in the proof of existence for $\tau$-function, and it is also the generating function of the Hirota's bilinear equations for the KP hierarchy \cite{ZhangYJ,1997-Loris-Willox,Shen}.

The reductions and generalizations are important topics in the study of integrable systems. Several approaches for the generalizations have been developed, e.g., constructing new
flows to \emph{extend} original systems. There are several ways to introduce new flows to make a new
compatible integrable system. In \cite{Xiong}, the KP hierarchy is extended by properly
combining additional flows. In \cite{Kamata2002}, extension of KP hierarchy is formulated by
introducing fractional order pseudo-differential operators. In \cite{Dimakis2004a,Dimakis2004b},
Dimakis and M\"uller-Hoissen extended the Moyal-deformed hierarchies by including additional evolution
equations with respect to the deformation parameters. In \cite{Carlet2003}, Carlet, Dubrovin and
Zhang defined a logarithm of the difference Lax operator and got the extended (2+1)D Toda lattice
hierarchy. Later, the Hirota's bilinear formalism and the relations of extended (2+1)D Toda lattice
hierarchy and extended 1D Toda lattice hierarchy have been studied
\cite{Milanov2007,Takasaki2010}. In \cite{Li2010}, Li, {\it et al}, studied the $\tau$-functions and
bilinear identities for the extended bi-graded Toda lattice hierarchy.
In \cite{Willox2014},
a discrete analogue of the symmetry constrained KP hierarchy is presented.
%%lin For the BKP case,
%%lin the ghost symmetry of the BKP hierarchy is constructed
%%lin via the method of squared eigenfunction potential \cite{ChengHeHu2010}.

The soliton equations with self-consistent sources have many physical applications (see \cite{2015-Chvartatskyi-Sydorenko,Grinevich-Taimanov-2008,Lin2001,Lin2006,Lin2010,Melnikov1983,Melnikov1989} and the references therein). For example,
 the speed of the solitons can be changed by the sources (see, e.g., \cite{Lin2001,Melnikov1989}). Recently, it is shown that the construction of the soliton equation with self-consistent sources is related to the ``\emph{squared eigenfunction symmetry}''
and the ``\emph{binary Darboux transformation}'' of the original soliton hierarchy \cite{2014-Doliwa-Lin,Lin2013}.
In \cite{Wu}, an extended BKP hierarchy is constructed following the idea in \cite{Liu} by using the symmetries generating functions (or the squared eigenfunction symmetries), where two kinds of new multi-component BKP hierarchy are constructed,  their $n$-reduction and $k$-constraint are discussed. This kind of extended BKP hierarchy can be thought of as a generalization of the BKP hierarchy by the squared eigenfunction symmetries. The extended BKP hierarchy obtained in \cite{Wu} includes two types of (2+1)-dimensional Sawada-Kotera equation with a self-consistent source (2d-SKwS-I and 2d-SKwS-II), where the 2d-SKwS-I coincides with that obtained in \cite{Hu} by a source generating method.
In \cite{Liu}, we introduced a method to construct an extended KP hierarchy,
then the bilinear identities for this extended KP hierarchy is constructed in \cite{Lin2013}. Hirota's bilinear equations are derived for all the zero-curvature forms in the extended KP hierarchy by introducing an auxiliary flow. It seems that the Hirota's bilinear equations given in \cite{Lin2013} are in a simpler form by comparing with the previous results in \cite{HuKP}.

In this paper, we will construct the bilinear identities for the extended BKP hierarchy, which is introduced in \cite{Wu}. The bilinear identities will generate all the Hirota's bilinear forms for the zero-curvature equations of this extended BKP hierarchy. As examples, we will derive the Hirota bilinear forms for the two types of (2+1)-dimensional Sawada-Kotera equation with a self-consistent source (2d-SKwS-I and 2d-SKwS-II).
To the best of our knowledge,
the Hirota's bilinear equations for the 2d-SKwS-II did not appear in the literatures before.

This paper is organized as follows. In Section \ref{Bilinear identities for BKP hierarchy}, the bilinear identities for the BKP hierarchy with a squared eigenfunction symmetry (or a ``ghost symmetry'') are constructed. In Section \ref{Bilinear identities for extended BKP hierarchy}, by considering the squared eigenfunction symmetry as an auxiliary flow, we  construct the bilinear identities for the extended BKP hierarchy, we also  prove that these bilinear identities fully characterize the extended BKP hierarchy. In Section \ref{tau functions and Hirota bilinear equations for extended BKP hierarchy}, we introduce the $\tau$-function for the extended BKP hierarchy and we find the generating functions for Hirota's bilinear form for the extended BKP hierarchy. In Section \ref{Transformation from Hirota's bilinear to nonlinear form}, we  transform the Hirota's bilinear form back to the nonlinear partial differential equations for the 2d-SKwS-I and 2d-SKwS-II cases, which verifies the correctness of our construction. In the last section, conclusion and remarks are given.

\section{Bilinear identities for the BKP hierarchy with its ``ghost symmetry''}
\setcounter{equation}{0}
\label{Bilinear identities for BKP hierarchy}

We recall some notation on a pseudo-differential operator.
For a pseudo-differential operator
 $P=\sum\limits_{i=-\infty}^n a_i\partial^i$ with $\partial:=\partial_x$,
we denote its non-negative part, negative part and adjoint operator by $P_+=\sum\limits_{i=0}^n a_i\partial^i$,
$P_-=\sum\limits_{i<0}^n a_i\partial^i$,
$P^*=\sum\limits_{i=-\infty}^n (-\partial)^i a_i$, respectively,
and $\res_\partial (P)=a_{-1}$.
For a Laurent series $f(\lambda)=\sum_i c_i\lambda^i$,
we denote $\res_\lambda f(\lambda)=c_{-1}$.

Consider a pseudo-differential operator
$L:=\partial+\sum\limits_{i=1}^{\infty}u_i \partial^{-i}$,
and define $B_n:=(L^n)_+$,
then the BKP hierarchy is defined by \cite{1983-Date-Jimbo-etal-BKP}
\begin{subequations}
\label{BKP}
\begin{align}
& L_{t_n}=[B_n ,L ], \qquad n=1,3,5,...  \label{BKP:a}
\end{align}
with the constraint
\begin{align}
& L^*=-\partial L \partial^{-1}. \label{BKP:b}
\end{align}
\end{subequations}

The constrained condition (\ref{BKP:b}) is equivalent to $B_n\cdot 1=0,\ \ n\in\mathbb{N}_{odd}$ (see \cite{DateRes,1983-Date-Jimbo-etal-BKP}).
Following the idea in \cite{Oevel} and \cite{Lin2013},
we can introduce a new $\partial_z$-flow as
\begin{subequations}
\label{extBKP}
\begin{align}
&\partial_z L=[ r \partial^{-1} q_x -q \partial^{-1} r_x ,L ], \label{extBKP:a}
\end{align}
where $q$ and $r$ satisfy
\begin{align}
&q_{t_n}=B_n(q), \qquad n=1,3,5,..., \label{extBKP:b}\\
&r_{t_n}=B_n(r). \label{extBKP:c}
\end{align}
\end{subequations}

Here $q$ and $r$ satisfying (\ref{extBKP:b}) and (\ref{extBKP:c}) are called eigenfunctions \cite{1997-Loris-Willox,1999-Loris-Willox},
while adjoint eigenfunctions $q^*$ and $r^*$ satisfy $q^*_{t_n}=-B^*_n(q^*),\ r^*_{t_n}=-B^*_n(r^*)$.
It is easy to see that $q_x$ and $r_x$ can be adjoint eigenfunctions, then the compatibility of the $\pa_z$-flow (\ref{extBKP:a})
and the $\pa_{t_n}$-flows (\ref{BKP:a}) are ensured.
The $\partial_z$-flow describes a symmetry for BKP hierarchy, which is called the
{\it squared eigenfunction symmetry} \cite{Oevel} or a {\it ghost flow} \cite{ghost,ghost2}.

We can introduce dressing operator $W=1+\sum\limits_{i=1}^{\infty}w_i(z,t)\partial^{-i}$
where $t=(t_1\equiv x,t_3,t_5,\cdots)$
as $L=W\partial W^{-1}$, where $W$ satisfies $t_n-$evolution equations as
$$\partial_{t_n}W=-(W\partial^nW^{-1})_{-}W=-L^n_-W,~~n=1,3,5,\cdots.$$
Then by constraint condition (\ref{BKP:b}), we have $W^*\partial W=\partial$.
The dressing operator $W$ also satisfies the following equation:
\begin{equation}
\label{Wz}
\partial_z W=( r \partial^{-1} q_x -q \partial^{-1} r_x )W.
\end{equation}

We define the wave and adjoint wave function as
\begin{subequations}
\label{wave}
\begin{align}
w(z,t,\lambda)&=W\e^{\xi(t,\lambda)} ,\\
w^*(z,t,\lambda)&=(W^*)^{-1}\e^{-\xi(t,\lambda)},
\end{align}
\end{subequations}
where $\xi(t,\lambda)=\sum\limits_{n=0}^\infty t_{2n+1}\lambda^{2n+1}$.
These wave functions satisfy
\begin{subequations}
\label{propertiesofw}
\begin{align}
&Lw(z,t,\lambda)=\lambda w(z,t,\lambda), &\partial_{t_n}w(z,t,\lambda)=B_n w(z,t,\lambda),\\
&L^{*}w^*(z,t,\lambda)=\lambda w^*(z,t,\lambda), &\partial_{t_n}w^*(z,t,\lambda)=-B_n^* w(z,t,\lambda).
\end{align}
\end{subequations}

We recall an important lemma here (see \cite{Dickey} for details and proof)
which is useful for proving the propositions in this paper.
\begin{lemma}
\label{lemma}
Let $P$ and $Q$ are pseudo-differential operators, then
\begin{equation}
\label{lem}
\res_\partial P\cdot Q^*=\res_\lambda P(\e^{\xi(t,\lambda)})\cdot Q(\e^{-\xi(t,\lambda)}).
\end{equation}
\end{lemma}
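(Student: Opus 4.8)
The plan is to reduce the identity (\ref{lem}) to the case of monomial operators and then compute both sides by hand. Both sides of (\ref{lem}) are bilinear in $P$ and $Q$, and since a pseudo-differential operator has order bounded above, in each of the two residues only finitely many pairs of monomial terms can contribute the relevant power ($\partial^{-1}$ on the left, $\lambda^{-1}$ on the right); hence the coefficient extracted by each residue is in fact a finite sum, and the bilinear reduction is legitimate. It therefore suffices to prove (\ref{lem}) for $P=f\partial^{a}$ and $Q=g\partial^{b}$, with $f,g$ functions and $a,b\in\mathbb Z$.

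For the right-hand side, the relation $\partial\,\e^{\xi(t,\lambda)}=\lambda\,\e^{\xi(t,\lambda)}$ gives $P(\e^{\xi})=f\lambda^{a}\e^{\xi}$ and $Q(\e^{-\xi})=g(-\lambda)^{b}\e^{-\xi}$, so that $P(\e^{\xi})\,Q(\e^{-\xi})=(-1)^{b}fg\,\lambda^{a+b}$ and $\res_\lambda P(\e^{\xi})Q(\e^{-\xi})$ equals $(-1)^{b}fg$ if $a+b=-1$ and $0$ otherwise. For the left-hand side, $Q^{*}=(g\partial^{b})^{*}=(-1)^{b}\partial^{b}\!\circ g$, whence $P\cdot Q^{*}=(-1)^{b}f\,\partial^{a+b}\!\circ g$; expanding by the generalized Leibniz rule $\partial^{c}\!\circ g=\sum_{j\ge 0}\binom{c}{j}g^{(j)}\partial^{c-j}$, valid for every $c\in\mathbb Z$, the coefficient of $\partial^{-1}$ arises only from the term $j=c+1$ with $c=a+b$. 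When $a+b\ge 0$ the binomial $\binom{c}{c+1}$ vanishes, when $a+b\le -2$ the index $j=c+1$ is negative and does not occur, and when $a+b=-1$ one obtains $\binom{-1}{0}g^{(0)}=g$; therefore $\res_\partial P\cdot Q^{*}$ equals $(-1)^{b}fg$ if $a+b=-1$ and $0$ otherwise, which matches the right-hand side.

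The only delicate point is the reduction itself: one must confirm that passing to the coefficient of $\partial^{-1}$ (respectively $\lambda^{-1}$) converts the a priori infinite computation into a finite one, so that expanding term by term and invoking bilinearity is justified — this is immediate from the order bounds on $P$ and $Q$. Everything else is the routine generalized-Leibniz bookkeeping indicated above, carried out in detail in \cite{Dickey}.
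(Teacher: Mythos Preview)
Your argument is correct: the bilinear reduction to monomials $P=f\partial^{a}$, $Q=g\partial^{b}$ is justified by the finiteness remark you make, and the two case computations match. Note that the paper itself does not supply a proof of this lemma but simply refers to \cite{Dickey}; the computation you give is precisely the standard one found there, so your proposal agrees with the cited source.
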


\begin{thm}
\label{res-pro}
The BKP hierarchy with a squared eigenfunction symmetry ((\ref{BKP}) and (\ref{extBKP})) is equivalent to the following residue identities
\begin{subequations}
\label{res}
\begin{align}
&\res_\lambda \lambda^{-1}w(z,t,\lambda)w(z,t^{'},-\lambda)=1 \label{res:a},\\
&\res_\lambda \lambda^{-1}w_z (z,t,\lambda)w(z,t^{'},-\lambda)=q(z,t)r(z,t^{'})-r(z,t)q(z,t^{'}) \label{res:b},\\
&q(z,t)=\res_\lambda \lambda^{-1}w(z,t,\lambda)\cdot\partial^{-1}_{x^{'}}( q(z,t^{'})w_{x^{'}}(z,t^{'},-\lambda)) \label{res:c},\\
&r(z,t)=\res_\lambda \lambda^{-1}w(z,t,\lambda)\cdot\partial^{-1}_{x^{'}}( r(z,t^{'})w_{x^{'}}(z,t^{'},-\lambda)) \label{res:d}.
\end{align}
\end{subequations}
where the inverse of $\pa$ is understood as pseudo-differential operator acting on an exponential
  function, e.g., $\pa^{-1}\left(r w_x\right)=(\pa^{-1}r \pa W) (e^{\xi(t,\ld)})$.
\end{thm}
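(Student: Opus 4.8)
The plan is to prove the equivalence in both directions, using Lemma \ref{lemma} as the bridge between operator residues and $\lambda$-residues. For the forward direction, I would start from the dressing operator $W$ and its defining properties: $L=W\partial W^{-1}$, the $t_n$-evolution $\partial_{t_n}W=-L^n_-W$, the constraint $W^*\partial W=\partial$, and the $\partial_z$-evolution \eqref{Wz}. Identity \eqref{res:a} should come from applying Lemma \ref{lemma} with $P=W\partial^{-1}$ and $Q=W$, together with the BKP constraint in the form $W^*\partial W=\partial$ (equivalently $(W^*)^{-1}\partial^{-1}(W)^{-1}=\partial^{-1}$), comparing $t$ and $t'$ via the standard trick of writing $w(z,t',-\lambda)$ as $(W(t')\text{-part})$ acting on $\e^{-\xi(t',\lambda)}$ and absorbing the difference $t-t'$ into the negative-order tail. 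Concretely, $\res_\lambda \lambda^{-1} w(z,t,\lambda)w(z,t',-\lambda) = \res_\partial \big(W(t)\partial^{-1}\big)\cdot\big(\partial^{-1}W^{-1}(t')\big)^*$ up to the constraint rearrangement, and this residue is $1$ because the operator $W(t)\partial^{-1}(W^{*}(t'))^{-1}$ has the form $\partial^{-1}+(\text{lower order})$ when $t=t'$, while the $t\ne t'$ case follows because moving along $t_n$-flows only changes things by commutators that do not affect the relevant residue.

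For \eqref{res:b} I would differentiate (or rather, use the $\partial_z$-flow directly): $w_z = (r\partial^{-1}q_x - q\partial^{-1}r_x)\,w$ at the operator level, so $w_z(z,t,\lambda) = \big((r\partial^{-1}q_x - q\partial^{-1}r_x)W\big)(\e^{\xi})$. Then $\res_\lambda \lambda^{-1} w_z(z,t,\lambda)w(z,t',-\lambda)$ becomes, via Lemma \ref{lemma}, $\res_\partial (r\partial^{-1}q_x - q\partial^{-1}r_x)W\partial^{-1}(W^*(t'))^{-1}$-type expression; using \eqref{res:a} (which identifies $W\partial^{-1}(W^*(t'))^{-1}$ with $\partial^{-1}$ modulo terms killed by the residue) this collapses to $\res_\partial (r\partial^{-1}q_x - q\partial^{-1}r_x) = r\cdot q_x|_{\text{res}} - \ldots$; more carefully, using $\res_\partial(a\partial^{-1}b\partial^{-1}) $-type formulas one gets the bilinear combination $q(z,t)r(z,t') - r(z,t)q(z,t')$ after the $t'$-dependence is correctly tracked through the auxiliary $\partial^{-1}_{x'}$ in the statement. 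Identities \eqref{res:c} and \eqref{res:d} are proved the same way: $q$ itself is an eigenfunction, so $q W = (\text{something})$, and applying Lemma \ref{lemma} to $P=qW\partial^{-1}$ against the adjoint wave data reproduces $q$; the $\partial^{-1}_{x'}$ in the statement is exactly the bookkeeping device (as the parenthetical remark in the theorem indicates) for writing $q(z,t')w_{x'}(z,t',-\lambda)$ as an operator applied to an exponential.

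For the converse direction, I would assume the four residue identities \eqref{res} hold for functions $w,w^*$ of the stated form and reconstruct the BKP hierarchy with its ghost symmetry. The standard argument: \eqref{res:a} forces $w = W\e^{\xi}$ for a unique monic dressing operator $W$ of the form $1+\sum w_i\partial^{-i}$ (this is the classical Sato argument — expand in $\lambda$ and solve order by order), and simultaneously forces the adjoint relation $W^*\partial W=\partial$, i.e., the BKP constraint \eqref{BKP:b}. Then the $t_n$-dependence of $w$ being of exponential type, combined with \eqref{res:a}, yields $\partial_{t_n}W\cdot W^{-1} = -(L^n)_-$, which is the Sato equation equivalent to \eqref{BKP:a}. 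Finally \eqref{res:b}–\eqref{res:d} encode the $z$-flow: \eqref{res:c}–\eqref{res:d} say $q,r$ are eigenfunctions in the sense $\partial_{t_n}q=B_n(q)$ etc. (by differentiating the residue identity in $t_n$ and using \eqref{res:a}), and \eqref{res:b} pins down $\partial_z W\cdot W^{-1} = r\partial^{-1}q_x - q\partial^{-1}r_x$, which is \eqref{Wz} and hence \eqref{extBKP:a}.

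The main obstacle I anticipate is the careful treatment of the two time arguments $t$ and $t'$ and the non-local operator $\partial^{-1}_{x'}$: one must be scrupulous about which variable the pseudo-differential operators act on and about the fact that $\partial^{-1}$ applied to a product like $r w_x$ is defined only through the operator $\partial^{-1}r\partial W$ acting on the exponential, not as a genuine integral. Getting the bilinear right-hand side $q(z,t)r(z,t')-r(z,t)q(z,t')$ in \eqref{res:b} (rather than, say, a single product or the wrong sign) will require tracking these dependencies precisely through Lemma \ref{lemma}; this is where the antisymmetric structure $r\partial^{-1}q_x-q\partial^{-1}r_x$ of the squared eigenfunction symmetry must be matched term by term against the residue computation.
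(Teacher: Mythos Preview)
Your overall architecture is right, and your converse direction matches the paper's Proposition~\ref{zBKP} closely: recover the BKP constraint and the Sato equations from \eqref{res:a}, obtain \eqref{extBKP:b}--\eqref{extBKP:c} by differentiating \eqref{res:c}--\eqref{res:d} in $t_n$, and identify $W_zW^{-1}$ from \eqref{res:b} by computing $\res_\partial W_zW^{-1}\partial^m$ for all $m\geq 0$. That part is fine.

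The gap is in the forward direction, specifically in how you handle the two time arguments $t$ and $t'$. Lemma~\ref{lemma} only applies when both exponentials carry the \emph{same} $t$: it says $\res_\lambda P(\e^{\xi(t,\lambda)})\,Q(\e^{-\xi(t,\lambda)})=\res_\partial PQ^*$. Your line ``becomes, via Lemma~\ref{lemma}, $\res_\partial (r\partial^{-1}q_x - q\partial^{-1}r_x)W\partial^{-1}(W^*(t'))^{-1}$'' is not a valid application of the lemma, because $w(z,t',-\lambda)$ is $W(t')$ acting on $\e^{-\xi(t',\lambda)}$, not on $\e^{-\xi(t,\lambda)}$. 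The follow-up ``using \eqref{res:a} (which identifies $W\partial^{-1}(W^*(t'))^{-1}$ with $\partial^{-1}$ modulo terms killed by the residue)'' is likewise not justified: \eqref{res:a} is a scalar residue identity, not an operator identity relating $W(t)$ and $W(t')$.

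The paper's fix is the one you gesture at for \eqref{res:a} but do not carry through to \eqref{res:b}--\eqref{res:d}: reduce to $t=t'$ by Taylor expansion. The key technical observation is that, by \eqref{propertiesofw} and \eqref{extBKP:b}--\eqref{extBKP:c}, any mixed derivative $\partial_{t_1}^{m_1}\partial_{t_3}^{m_3}\cdots\partial_{t_k}^{m_k}$ acts on each of $w(z,t,\lambda)$, $q(z,t)$, and $r(z,t)$ by the \emph{same} differential operator $P_{m_1\cdots m_k}$ (a differential polynomial in the $B_n$, with no constant term). Hence to prove \eqref{res:b} it suffices to check, for every such multi-index,
\[
\res_\lambda \lambda^{-1}w_z(z,t,\lambda)\,P_{m_1\cdots m_k}\big(w(z,t,-\lambda)\big)
= q(z,t)\,P_{m_1\cdots m_k}\big(r(z,t)\big) - r(z,t)\,P_{m_1\cdots m_k}\big(q(z,t)\big),
\]
and now Lemma~\ref{lemma} applies legitimately (everything lives at the single time $t$), after which the BKP constraint $W\partial^{-1}W^*=\partial^{-1}$ collapses the operator residue to $\res_\partial(r\partial^{-1}q_x-q\partial^{-1}r_x)\partial^{-1}P^*_{m_1\cdots m_k}$, which one reads off directly. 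The same device handles \eqref{res:c}--\eqref{res:d}. Without this reduction, your residue manipulations for \eqref{res:b}--\eqref{res:d} do not go through as written; the ``obstacle'' you flag at the end is real and needs exactly this ingredient.
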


\begin{proof}
The bilinear identity (\ref{res:a}) of BKP hierarchy has already been proved in \cite{DateRes,1983-Date-Jimbo-etal-BKP}
where $z$ can be regarded as a fixed parameter here.
By (\ref{extBKP:b}), (\ref{extBKP:c}), and (\ref{propertiesofw}), we know that
the mixed partial derivatives $\partial_{t_1}^{m_1}\partial_{t_3}^{m_3}\cdots\partial_{t_k}^{m_k}$ ($k\in\mathbb{N}_{odd}$) on $w(z,t,\lambda)$, $q(z,t)$ and $r(z,t)$
have the same expression in terms of a differential operator $P_{m_1\cdots m_k}$
\begin{align*}
&\partial_{t_1}^{m_1}\partial_{t_3}^{m_3}\cdots\partial_{t_k}^{m_k}w(z,t,\lambda)
=P_{m_1\cdots m_k}(w(z,t,\lambda)),\\
&\partial_{t_1}^{m_1}\partial_{t_3}^{m_3}\cdots\partial_{t_k}^{m_k}q(z,t)=P_{m_1\cdots m_k}(q(z,t)),\\
&\partial_{t_1}^{m_1}\partial_{t_3}^{m_3}\cdots\partial_{t_k}^{m_k}r(z,t)=P_{m_1\cdots m_k}(r(z,t)),
\end{align*}
Notice that $P_{m_1\cdots m_k}$ does not contain constant term.
Then to prove (\ref{res:b}), we only need to show
\begin{align*}
&\res_\lambda \lambda^{-1}w_z(z,t,\lambda)\partial_{t_1}^{m_1}\partial_{t_3}^{m_3}\cdots\partial_{t_k}^{m_k}w(z,t,-\lambda) \\
=&\res_\lambda (r\partial^{-1}q_x-q\partial^{-1}r_x)W\e^{\xi(t,\lambda)}P_{m_1\cdots m_k}W(-\partial^{-1})\e^{-\xi(t,\lambda)}\\
=&\res_\partial (r\partial^{-1}q_x-q\partial^{-1}r_x)W\partial^{-1}W^*P^*_{m_1\cdots m_k} \\
=&\res_\partial (r\partial^{-1}q_x-q\partial^{-1}r_x)\partial^{-1}P^*_{m_1\cdots m_k} \\
=&qP_{m_1\cdots m_k}\partial^{-1}(r_x)-rP_{m_1\cdots m_k}\partial^{-1}(q_x)
=qP_{m_1\cdots m_k}(r)-rP_{m_1\cdots m_k}(q).
\end{align*}
%% as we can derive from the constraint condition $W^*\partial W=\partial$ that $W^*=\partial W^{-1}\partial^{-1}$.
This proves (\ref{res:b}).

Notice that for any $P_{m_1\cdots m_k}$ introduced above,
we have
\begin{align*}
& \res_\lambda \partial_{t_1}^{m_1}\partial_{t_3}^{m_3}\cdots\partial_{t_k}^{m_k}
\lambda^{-1}w(z,t,\lambda)\cdot\partial^{-1}(q(z,t)w_{x}(z,t,-\lambda))\\
=&\res_\lambda P_{m_1\cdots m_k}(\lambda^{-1}w(z,t,\lambda))\cdot\partial^{-1}(q(z,t)w_{x}(z,t,-\lambda))\\
=&\res_\lambda P_{m_1\cdots m_k}W\partial^{-1}\e^{\xi(t,\lambda)}\cdot\partial^{-1}q(z,t)\partial W\e^{-\xi(t,\lambda)}\\
=&\res_\partial P_{m_1\cdots m_k}W\partial^{-1}\cdot(\partial^{-1}q(z,t)\partial W)^*\\
=&\res_\partial P_{m_1\cdots m_k}W\partial^{-1}W^*(-\partial)q(z,t)(-\partial)^{-1}\\
=&\res_\partial P_{m_1\cdots m_k}q(z,t)\partial^{-1}=P_{m_1\cdots m_k}(q(z,t))
=\partial_{t_1}^{m_1}\partial_{t_3}^{m_3}\cdots\partial_{t_k}^{m_k} q(z,t),
\end{align*}
Therefore, we have
$$q(z,t^{'})=\res_\lambda \lambda^{-1}w(z,t^{'},\lambda)\cdot\partial^{-1}(q(z,t)w_{x}(z,t,-\lambda)),$$
and similarly
$$r(z,t^{'})=\res_\lambda \lambda^{-1}w(z,t^{'},\lambda)\cdot\partial^{-1}(r(z,t)w_{x}(z,t,-\lambda)).$$
Then (\ref{res:c}) and (\ref{res:d}) are proved by interchanging $t$ and $t^{'}$.

  The proof for the inverse part of this theorem is written as the following proposition.
\end{proof}

\begin{proposition}
\label{zBKP}
If functions $q(z,t)$, $r(z,t)$ and wave function
$$w(z,t,\lambda)=W\e^{\xi(t,\lambda)},
\qquad W=(1+\sum\limits_{i\geq 1}w_i(z,t)\lambda^{-i}),$$
satisfy the residue identities (\ref{res}),
then the pseudo-differential operator
$L=W\partial W^{-1}$,
$q$ and $r$ are a solution of BKP hierarchy with squared eigenfunction symmetry ((\ref{BKP}) and (\ref{extBKP})).
\end{proposition}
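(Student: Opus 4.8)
The plan is to run the argument of Theorem \ref{res-pro} in reverse. The hypothesis is that $w=W\e^{\xi}$, $q$, $r$ satisfy the residue identities (\ref{res}), and we must recover (\ref{BKP:a}), (\ref{BKP:b}), (\ref{extBKP:a}), (\ref{extBKP:b}), (\ref{extBKP:c}). The key translation device throughout is Lemma \ref{lemma}: every residue-in-$\lambda$ statement about $w$ and $w(\cdot,t',-\lambda)$ becomes, after expanding $w^*(\cdot,t',\lambda)$ appropriately, a residue-in-$\partial$ statement about pseudo-differential operators, which can then be read off coefficient by coefficient in the Taylor expansion about $t'=t$.

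First I would extract the standard BKP consequences from (\ref{res:a}) alone. Writing $\tilde W = \tilde W(z,t')$ for the dressing operator at $t'$, identity (\ref{res:a}) says $\res_\partial W\partial^{-1}\tilde W^{*}=1$ for all $t'$; setting $t'=t$ gives $\res_\partial W\partial^{-1}W^{*}=1$, and differentiating in the $t'_n$ before setting $t'=t$ produces the relations that force $W^{*}\partial W=\partial$ (hence the constraint (\ref{BKP:b}) for $L=W\partial W^{-1}$) and the Sato equations $\partial_{t_n}W=-L^n_-W$, equivalently $L_{t_n}=[B_n,L]$. This is exactly the classical Date--Jimbo--Kashiwara--Miwa argument and I would simply invoke \cite{DateRes,1983-Date-Jimbo-etal-BKP}, treating $z$ as a parameter, rather than reproving it.

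Next I would handle (\ref{res:c}) and (\ref{res:d}) to identify $q,r$ as eigenfunctions, i.e. to get (\ref{extBKP:b})--(\ref{extBKP:c}). Using Lemma \ref{lemma}, identity (\ref{res:c}) with $t'=t$ reads $\res_\partial W\partial^{-1}\cdot(\partial^{-1}q\,\partial W)^{*}=q$, which by $W^{*}\partial W=\partial$ collapses to the tautology $q=q$; the content comes from differentiating (\ref{res:c}) in $t_n$ (acting on the $t$-variable) before identifying $t'$ with $t$. Matching the $\partial_{t_n}$-derivative of the left side, which involves $q_{t_n}$, against the operator computation on the right, which produces $B_n(q)=\res_\partial B_n W\partial^{-1}W^{*}(-\partial)q(-\partial)^{-1}$ after moving $B_n=(L^n)_+$ through, yields $q_{t_n}=B_n(q)$; likewise $r_{t_n}=B_n(r)$ from (\ref{res:d}). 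The one subtlety here, flagged in the theorem statement, is that $\partial^{-1}$ must be read as the pseudo-differential operator $\partial^{-1}q\,\partial W$ applied to $\e^{\xi}$, so that Lemma \ref{lemma} genuinely applies; I would state this carefully at the outset.

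Finally, with $L_{t_n}=[B_n,L]$, $W^{*}\partial W=\partial$, and $q,r$ eigenfunctions in hand, I would use (\ref{res:b}) to pin down the $\partial_z$-flow. Define $M:=\partial_z W\cdot W^{-1}$, a pseudo-differential operator of order $\le -1$. Lemma \ref{lemma} turns (\ref{res:b}) at $t'=t$ into $\res_\partial \partial_z W\cdot\partial^{-1}W^{*}\cdot(\text{stuff})$-type identities; more usefully, differentiating (\ref{res:b}) repeatedly in the $t'$-variables and using that mixed $t'$-derivatives of $w(\cdot,t',-\lambda)$ are $P_{m_1\cdots m_k}(W)\e^{-\xi}$ (exactly the operators $P_{m_1\cdots m_k}$ of the forward proof, which contain no constant term) shows that $\res_\partial M W\partial^{-1}W^{*}P^{*}_{m_1\cdots m_k} = qP_{m_1\cdots m_k}\partial^{-1}(r_x)-rP_{m_1\cdots m_k}\partial^{-1}(q_x)$ for all such $P$. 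Since the operators $P_{m_1\cdots m_k}$ separate coefficients, this forces $M W\partial^{-1}W^{*} = q\partial^{-1}r_x - r\partial^{-1}q_x$ modulo the part killed by taking residues against all $P^{*}$ with no constant term — and the missing (constant-term) freedom is exactly fixed by (\ref{res:b}) evaluated directly. Using $W\partial^{-1}W^{*}=\partial^{-1}$ (the adjoint form of $W^{*}\partial W=\partial$) this gives $M = (q\partial^{-1}r_x - r\partial^{-1}q_x)\partial = q\partial^{-1}r_{xx}\cdot(\ldots)$ — more precisely, re-deriving the sign as in the forward computation, $\partial_z W = (r\partial^{-1}q_x - q\partial^{-1}r_x)W$, which is (\ref{Wz}); conjugating gives $\partial_z L=[r\partial^{-1}q_x-q\partial^{-1}r_x,L]$, i.e. (\ref{extBKP:a}).

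**The main obstacle** I anticipate is the bookkeeping in this last step: showing that the residue identities against \emph{all} $P_{m_1\cdots m_k}$ (which omit the constant term) are enough to determine $M$ uniquely, i.e. that no information is lost by the absence of a constant term in $P_{m_1\cdots m_k}$. The resolution is that $M$ has order $\le -1$ so its top coefficient is already visible, and the residue pairing $\res_\partial(-\cdot P^{*})$ over differential operators $P$ with zero constant term is nondegenerate on operators of order $\le -1$; combined with the $t'=t$ base case of (\ref{res:b}) this closes the gap. The rest is a routine transcription of the forward proof's displayed computation read backwards.
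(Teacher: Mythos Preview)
Your overall structure matches the paper's exactly: derive (\ref{BKP}) from (\ref{res:a}) by citing \cite{DateRes,1983-Date-Jimbo-etal-BKP}, obtain (\ref{extBKP:b})--(\ref{extBKP:c}) by applying $\partial_{t_n}$ to (\ref{res:c})--(\ref{res:d}), and finally extract (\ref{Wz}) from (\ref{res:b}).

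The one place you diverge is in the last step. You propose to probe $M=W_zW^{-1}$ via the full family $P_{m_1\cdots m_k}$ of mixed $t'$-derivatives and then argue a nondegeneracy statement for operators of order $\le -1$. This works, but the paper's route is shorter and sidesteps your ``main obstacle'' entirely: differentiate (\ref{res:b}) once in $x'$ to rewrite it as
\[
\res_\lambda w_z(z,t,\lambda)\,w^*(z,t',\lambda)=r(z,t)\,q_{x'}(z,t')-q(z,t)\,r_{x'}(z,t')
\]
(using $w^*(z,t,\lambda)=-\lambda^{-1}w_x(z,t,-\lambda)$), then apply only $(-\partial_{x'})^m$ and set $t'=t$. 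By Lemma~\ref{lemma} this gives directly
\[
\res_\partial W_zW^{-1}\partial^m \;=\; r(-\partial)^m(q_x)-q(-\partial)^m(r_x)\qquad(m\ge 0),
\]
which is literally the coefficient of $\partial^{-m-1}$ in $W_zW^{-1}$ (the $(+)$-part vanishes because $W=1+O(\partial^{-1})$). Summing over $m$ yields $W_zW^{-1}=r\partial^{-1}q_x-q\partial^{-1}r_x$ at once. Your family $P_{m_1\cdots m_k}$ already contains the subfamily $P_{m,0,0,\ldots}=\partial^m$, so in effect you were carrying the paper's computation inside yours; restricting to that subfamily removes the nondegeneracy worry and the stray sign/$\partial$-bookkeeping you flagged near the end.
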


\begin{proof}
The equation (\ref{BKP}) can be derived from (\ref{res:a}) (see \cite{DateRes,1983-Date-Jimbo-etal-BKP}).
The (\ref{extBKP:b}) and (\ref{extBKP:c}) can be proved by taking
$\partial_{t_n}$ on (\ref{res:c}) and (\ref{res:d}) respectively.
Equation (\ref{extBKP:a}) can be proved by (\ref{res:b}) as the following.

%% We can prove that (\ref{res:a}), (\ref{res:c}) and (\ref{res:d})
%% are equivalent to (\ref{BKP}), (\ref{extBKP:b}) and (\ref{extBKP:c})
%% by simply taking $x^{'}$-derivative on both sides of (\ref{res:c})(\ref{res:d})
%% and taking $t_n^{'}$-derivative on both sides of (\ref{res:b}).
%% To show that (\ref{res:b}) is equivalent to (\ref{extBKP:a}),
%% we only need to prove that (\ref{Wz}) can be derived from (\ref{res:b}).

It is easy to show that $(W_z W^{-1})_+=0$.
Notice that the adjoint wave function $w^*(z,t,\lambda)$
can be written in the following way
    $$w^*(z,t,\lambda):=(W^*)^{-1}\e^{-\xi(t,\lambda)}=\partial W\partial^{-1}\e^{-\xi(t,\lambda)}=-\lambda^{-1}w_x(z,t,-\lambda),$$
then from (\ref{res:b}), we have
    $$\res_\lambda w_z(z,t,\lambda)w^*(z,t^{'},\lambda)
    =r(z,t)q_{x^{'}}(z,t^{'})-q(z,t)r_{x^{'}}r(z,t^{'}).$$
Furthermore, we have
\begin{align*}
    \res_\partial W_z W^{-1}\partial^m &=\res_\lambda W_z \e^{\xi(t,\lambda)}(-\partial)^m W^{*-1}\e^{-\xi(t,\lambda)}
    &=\res_\lambda w_z(z,t,\lambda)(-\partial)^m w^*(z,t,\lambda)\\
    &=r(-\partial)^m q_x-q(-\partial)^m r_x,
\end{align*}
which means
\begin{equation*} W_zW^{-1}=\sum_{m=0}^{\infty}(r(-\partial)^m(q_x)-q(-\partial)^m(r_x))\partial^{-m-1}
    =r\partial^{-1}q_x-q\partial^{-1}r_x.
\end{equation*}
Hence (\ref{Wz}) holds, which implies (\ref{extBKP:a}).

\end{proof}

\section{Bilinear identities for an extended BKP hierarchy}
\label{Bilinear identities for extended BKP hierarchy}

In \cite{Wu}, an extended BKP hierarchy is defined
by using the squared eigenfunction symmetry as (\ref{extBKP:a}).
Two types of BKP hierarchy with self-consistent sources are found,
their Lax representations are also given.

We recall the extended BKP hierarchy here (for a fixed odd $k$ and $k\neq 1$)
\begin{subequations}
\label{extended}
\begin{align}
&\partial_{\bar{t_k}} L=[ B_k+r \partial^{-1} q_x -q \partial^{-1} r_x ,L ] ,\label{extended:b}\\
&L_{t_n}=[B_n ,L ] ,\quad  L^{*}=-\partial L\partial^{-1}, \quad (n\neq k,n=1,3,5,...)\label{extended:a}\\
&q_{t_n}=B_n(q),\label{extended:c}\\
&r_{t_n}=B_n(r),\label{extended:d}
\end{align}
\end{subequations}
where $q$ and $r$ are eigenfunctions.
This hierarchy is constructed by replacing an arbitrary fixed $k$-th flow $\partial_{t_k}$ by $\partial_{\bar{t_k}}$ where the $\partial_{\bar{t_k}}$-flow is a linear combination of $\partial_{t_k}$- and $\partial_z$-flow.

\begin{rmk}
  For simplicity of the notions, we will still use the symbols $w(z,t,\ld)$, $w^*(z,t,\ld)$,
  $q(z,t)$, $r(z,t)$, $L$, $W$, etc., in this and the following sections, but they should be understood to
  the case of the extended BKP hierarchy (\ref{extended}).
  For example, from now on,
  $t= (t_1,t_3,\cdots,t_{k-2},\bar{t_k},t_{k+2},\cdots)$ ($k$ is odd).
\end{rmk}

In \cite{Shen,Shen2}, the Hirota's bilinear equations for constrained BKP hierarchy (\ref{extended}) are constructed.
A natural question is to find the bilinear identities
for the extended BKP hierarchy (\ref{extended}),
because the bilinear identities provide a systematic way to generate
all the Hirota bilinear equations
in the extended BKP hierarchy (\ref{extended}).
In this section, we give a detailed construction on how to derive the bilinear identities for the extended BKP hierarchy (\ref{extended}).

Dressing operator $W$ is given by $W=1+\sum\limits_{i=1}^{\infty}w_i(z,t)\partial^{-i}$,
where $t=(t_1\equiv x,t_3,\cdots,t_{k-2},\bar{t_k},t_{k+2},\cdots)$,
it satisfies
    $$\partial_{\bar{t_k}}W=-L_-^k W+(r \partial^{-1} q_x -q \partial^{-1} r_x)W.$$
The wave function and its adjoint are defined as (\ref{wave})
except for the change:
$\xi(t,\lambda)=\bar{t_k}\lambda^k+\sum\limits_{n\neq k}t_n\lambda^n,$
($k$ and $n$ are odd).
The wave and adjoint wave functions satisfy
\begin{subequations}
\label{propertiesofwtk}
\begin{align}
&Lw(z,t,\lambda)=\lambda w(z,t,\lambda),~~~ \partial_{t_n}w(z,t,\lambda)=B_n w(z,t,\lambda),~~(n\neq k)\\
&L^{*}w^*(z,t,\lambda)=\lambda w^*(z,t,\lambda),~ \partial_{t_n}w^*(z,t,\lambda)=-B_n^* w(z,t,\lambda).
\end{align}
\end{subequations}

Then we have the bilinear identities for the extended BKP hierarchy
as the following proposition.

\begin{proposition}
The bilinear identities for the extended BKP hierarchy (\ref{extended}) are given by the following sets of residue identities with an auxiliary variable $z$:
\begin{subequations}
\label{res-extented}
\begin{align}
&\res_\lambda \lambda^{-1}w(z-\bar{t_k},t,\lambda)w(z-\bar{t_k}^{'},t^{'},-\lambda)=1 ,\label{res-extented:a}\\
&\res_\lambda \lambda^{-1}w_z (z-\bar{t_k},t,\lambda)w(z-\bar{t_k}^{'},t^{'},-\lambda) \nonumber\\
&\qquad
=q(z-\bar{t_k},t)r(z-\bar{t_k}^{'},t^{'})-r(z-\bar{t_k},t)q(z-\bar{t_k}^{'},t^{'}) ,
\label{res-extented:b}\\
&q(z-\bar{t_k},t)=\res_\lambda \lambda^{-1}w(z-\bar{t_k},t,\lambda)\cdot\partial^{-1}_{x^{'}} (q(z-\bar{t_k}^{'},t^{'})w_{x^{'}}(z-\bar{t_k}^{'},t^{'},-\lambda)) ,\label{res-extented:c}\\
&r(z-\bar{t_k},t)=\res_\lambda \lambda^{-1}w(z-\bar{t_k},t,\lambda)\cdot\partial^{-1}_{x^{'}} (r(z-\bar{t_k}^{'},t^{'})w_{x^{'}}(z-\bar{t_k}^{'},t^{'},-\lambda)) .\label{res-extented:d}
\end{align}
\end{subequations}
\end{proposition}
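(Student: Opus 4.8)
The plan is to reduce the extended bilinear identities (\ref{res-extented}) to the already-established identities (\ref{res}) of Theorem~\ref{res-pro} by means of a change of variables that ``undoes'' the mixing of the $\partial_{t_k}$- and $\partial_z$-flows. The key observation is that the $\partial_{\bar{t_k}}$-flow in (\ref{extended:b}) is exactly $\partial_{t_k}+\partial_z$ in disguise: if we set $\tilde z := z - \bar{t_k}$ and keep all other times fixed, then for any object $f$ solving the extended hierarchy, the function $\tilde f(\tilde z, t_1,\dots,t_{k-2},t_k,t_{k+2},\dots) := f(\tilde z + t_k, t_1,\dots,t_{k-2},t_k,t_{k+2},\dots)$ (reading $\bar{t_k}$ as $t_k$) satisfies $\partial_{t_k}\tilde f = (\partial_{\bar t_k} f + \partial_z f)|_{\ldots}$ and $\partial_z \tilde f = \partial_z f|_{\ldots}$. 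Under this substitution the Lax equations (\ref{extended:b})--(\ref{extended:d}) become precisely the BKP hierarchy with ghost symmetry (\ref{BKP})--(\ref{extBKP}): the $\partial_{t_k}$-flow of $\tilde L$ is $[B_k, \tilde L]$ while the residual $\partial_z$-flow retains the squared-eigenfunction form $[r\partial^{-1}q_x - q\partial^{-1}r_x, \tilde L]$, and $\tilde q,\tilde r$ become genuine eigenfunctions for all odd flows including $t_k$.

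The steps, in order, would be: first, spell out the shift $z \mapsto z - \bar{t_k}$ at the level of the dressing operator $W$, verifying that $\tilde W(\tilde z, t) := W(\tilde z + \bar{t_k}, t)$ satisfies $\partial_{t_n}\tilde W = -\tilde L^n_- \tilde W$ for \emph{all} odd $n$ (including $n=k$) together with $\partial_z \tilde W = (r\partial^{-1}q_x - q\partial^{-1}r_x)\tilde W$, i.e. equations (\ref{Wz}) and the $W$-evolutions underlying (\ref{BKP})--(\ref{extBKP}); this is just the chain rule $\partial_{t_k}\tilde W = (\partial_{\bar t_k} W + \partial_z W)|_{z \to \tilde z + \bar t_k}$ combined with the displayed evolution $\partial_{\bar t_k}W = -L^k_- W + (r\partial^{-1}q_x - q\partial^{-1}r_x)W$. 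Second, observe that the wave function transforms compatibly: with $\xi(t,\lambda) = \bar{t_k}\lambda^k + \sum_{n\neq k} t_n\lambda^n$ for the extended hierarchy, the shifted exponent picks up exactly the missing $t_k\lambda^k$ term, so $\tilde w(\tilde z, t, \lambda) := w(\tilde z + \bar{t_k}, t, \lambda)\,\mathrm{e}^{\,\bar{t_k}\lambda^k}$ is the wave function of the ghost-symmetry BKP hierarchy with the \emph{full} exponent $\tilde\xi(t,\lambda) = \sum_{n\ \mathrm{odd}} t_n\lambda^n$. Third, apply Theorem~\ref{res-pro} to $\tilde w, \tilde q, \tilde r$ to get identities (\ref{res:a})--(\ref{res:d}), and then translate back: in the residue $\res_\lambda \lambda^{-1}\tilde w(\tilde z, t,\lambda)\tilde w(\tilde z', t', -\lambda)$, the two extra exponential factors are $\mathrm{e}^{\bar t_k \lambda^k}$ and $\mathrm{e}^{-\bar t_k' \lambda^k}$, which do \emph{not} cancel in general — but after undoing the shift $\tilde z = z - \bar{t_k}$, $\tilde z' = z' - \bar{t_k}'$ these are absorbed back into the wave functions, and since $z$ and $z'$ are independent auxiliary variables we may further specialize $z = z'$, yielding exactly (\ref{res-extented:a})--(\ref{res-extented:d}) as written.

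The main obstacle, and the point that needs care, is bookkeeping the interplay between the two auxiliary parameters $z$ and $\bar{t_k}$ and making sure the exponential prefactors $\mathrm{e}^{\pm \bar{t_k}\lambda^k}$ are tracked correctly through the residue computation — in particular, confirming that the shifted combination $w(z - \bar{t_k}, t, \lambda)$ appearing in (\ref{res-extented}) is genuinely the ghost-symmetry wave function evaluated at a single effective time, so that Lemma~\ref{lemma} and Theorem~\ref{res-pro} apply verbatim. One must also check that the $\partial_z$-evolution of $\tilde W$ really is source-free of any $B_k$ contribution after the shift (it is, because $\partial_z$ acting on $W(\tilde z + \bar t_k, t)$ at fixed $\bar t_k$ only sees the explicit $z$-dependence), and that the eigenfunction equations (\ref{extended:c})--(\ref{extended:d}) are preserved — these are unaffected by the shift since $q_{t_k}, r_{t_k}$ were already $B_k(q), B_k(r)$ in (\ref{extended}). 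Everything else is a routine application of the residue lemma exactly as in the proof of Theorem~\ref{res-pro}, with the pseudo-differential operator $P_{m_1\cdots m_k}$ now ranging over monomials in all odd flows.
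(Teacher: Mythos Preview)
Your strategy---reducing to Theorem~\ref{res-pro} via a shift in the auxiliary variable---is exactly the paper's approach, but there is a sign error in your chain-rule setup that makes your first step fail as written. You define $\tilde f(\tilde z,\ldots,t_k,\ldots):=f(\tilde z+t_k,\ldots,t_k,\ldots)$ and correctly compute $\partial_{t_k}\tilde f=(\partial_{\bar t_k}+\partial_z)f$, but then assert this is the pure $B_k$-flow. It is not: using $\partial_{\bar t_k}W=-L^k_-W+(r\partial^{-1}q_x-q\partial^{-1}r_x)W$ together with $\partial_zW=(r\partial^{-1}q_x-q\partial^{-1}r_x)W$ gives
\[
(\partial_{\bar t_k}+\partial_z)W=-L^k_-W+2\,(r\partial^{-1}q_x-q\partial^{-1}r_x)W,
\]
so your $\tilde W$ does \emph{not} satisfy $\partial_{t_k}\tilde W=-\tilde L^k_-\tilde W$. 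The combination that strips off the ghost part is $\partial_{\bar t_k}-\partial_z$, and this is precisely what the paper uses: it looks directly at $w(z-\bar t_k,t,\lambda)$ as a function of $(z,t)$ with $z$ held fixed, and the chain rule (the first argument depends on $\bar t_k$ with a \emph{minus} sign) gives
\[
\frac{d}{d\bar t_k}\,w(z-\bar t_k,t,\lambda)=(\partial_{\bar t_k}-\partial_z)w\big|_{z\to z-\bar t_k}=B_k\,w(z-\bar t_k,t,\lambda).
\]
With this single observation, $w(z-\bar t_k,t,\lambda)$, $q(z-\bar t_k,t)$, $r(z-\bar t_k,t)$ satisfy all the evolution equations of the BKP hierarchy with ghost symmetry (\ref{BKP})--(\ref{extBKP}), with $z$ as the ghost parameter and $\bar t_k$ as an ordinary $k$-th time, and Theorem~\ref{res-pro} yields (\ref{res-extented}) verbatim.

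This also dissolves your concern about extra exponential factors: there are none to track. The extended wave function already carries $\e^{\bar t_k\lambda^k}$ in its exponent $\xi(t,\lambda)=\bar t_k\lambda^k+\sum_{n\ne k}t_n\lambda^n$, so after substituting the first argument by $z-\bar t_k$ the object $w(z-\bar t_k,t,\lambda)$ is literally a BKP-with-ghost wave function in the standard form $W\e^{\xi}$; no multiplication by $\e^{\pm\bar t_k\lambda^k}$ is required, and your definition $\tilde w:=w(\tilde z+\bar t_k,t,\lambda)\,\e^{\bar t_k\lambda^k}$ would in fact double-count that factor. Finally, there is only one auxiliary $z$ in (\ref{res-extented}), shared by both factors, exactly as the ghost variable is shared in (\ref{res}); there are no independent $z,z'$ to specialize afterward.
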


\begin{proof}
Notice that
$\frac{d}{d\bar{t_k}}w(z-\bar{t_k},t,\lambda)
=(\partial_{\bar{t_k}}-\partial_z)
w(z,t,\lambda)|_{z=z-\bar{t_k}}
=B_k(w(z-\bar{t_k},t,\lambda)),$
so (\ref{res-extented:a}) can be proved as the original BKP case \cite{DateRes,1983-Date-Jimbo-etal-BKP}.

%%lin So according to Lemma \ref{lemma} we have
%%lin\begin{align*}
%%lin &\res_\lambda \lambda^{-1}w(z-\bar{t_k},t,\lambda)\cdot \frac{d}{d\bar{t_k}}w(z-\bar{t_k},t,-\lambda)\\
%%lin =&\res_\lambda \lambda^{-1} w(z-\bar{t_k},t,\lambda)\cdot B_k(w(z-\bar{t_k},t,-\lambda))
%%lin =\res_\lambda (W\partial^{-1}\e^{\xi(t,\lambda)})\cdot(B_k W\e^{-\xi(t,\lambda)})\\
%%lin =&\res_\partial W\partial^{-1}W^*B_k^*
%%lin =\res_\partial \partial^{-1}B_k^*=B_k(1)=0,
%%lin \end{align*}
%%lin since $k$ is odd.
%%lin On the other hand, all other terms of the Taylor expansion
%%lin in the l.h.s. of (\ref{res-extented:a}), which have expressions like
%%lin $\res_{\lambda}\lambda^{-1}w(z-\bar{t_k},t,\lambda)\partial_{t_1}^{m_1} \partial_{t_3}^{m_3} \cdots \partial_{t_{k-2}}^{m_{k-2}} \partial_{t_{k+2}}^{m_{k+2}}\cdots w(z-\bar{t_k},t,-\lambda)$
%%lin (where $\partial_{\bar{t_k}}$ is not included),
%%lin keep the total summation to 1 for the same reason as of
%%lin the proof of the original bilinear identity for the BKP hierarchy
%% in \cite{DateRes,1983-Date-Jimbo-etal-BKP},
%%lin then (\ref{res-extented:a}) holds.

The proofs of (\ref{res-extented:b})-(\ref{res-extented:d}) are  similar to the proofs of (\ref{res:b})-(\ref{res:d}), just recalling that $\frac{d}{d\bar{t_k}}w(z-\bar{t_k},t,\lambda)=B_k(w(z-\bar{t_k},t,\lambda)).$
\end{proof}

\begin{proposition}
Suppose
    $$w(z,t,\lambda)=W\e^{\xi(t,\lambda)},
    \qquad W=(1+\sum\limits_{i\geq 1}w_i(z,t)\lambda^{-i}),$$
$q(z,t)$ and $r(z,t)$ satisfy the bilinear identities (\ref{res-extented}),
then the pseudo-differential operator $L=W\partial W^{-1}$, $q$ and $r$
satisfy the extended BKP hierarchy (\ref{extended}).
\end{proposition}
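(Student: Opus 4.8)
The plan is to reduce the converse statement to the already-established Proposition~\ref{zBKP} by means of the change of variable $z \mapsto z - \bar{t_k}$, which is precisely the substitution that turns the extended flow $\partial_{\bar{t_k}}$ into the ordinary BKP flow $\partial_{t_k}$ together with the ghost flow $\partial_z$. First I would introduce the shifted objects $\tilde{w}(z,t,\lambda) := w(z-\bar{t_k},t,\lambda)$, $\tilde q(z,t):=q(z-\bar{t_k},t)$, $\tilde r(z,t):=r(z-\bar{t_k},t)$, and correspondingly $\tilde W$ the dressing operator obtained from $W$ by the same shift in its first argument. With this notation the bilinear identities (\ref{res-extented}) are \emph{literally} the identities (\ref{res}) for the tilded quantities, with $\bar{t_k}$ now playing the role of the ordinary time $t_k$ in the argument list $t$. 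Hence Proposition~\ref{zBKP} applies verbatim to $\tilde W$, $\tilde L := \tilde W \partial \tilde W^{-1}$, $\tilde q$, $\tilde r$, and yields that these satisfy the BKP hierarchy with ghost symmetry (\ref{BKP})--(\ref{extBKP}), where the time variables are $(t_1,\dots,t_{k-2},\bar{t_k},t_{k+2},\dots)$ and the ghost variable is $z$.

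The second step is to translate this conclusion back to the unshifted variables. By the chain rule, for any function $f$, $\partial_{\bar{t_k}} f(z,t) = \big(\partial_{t_k}\tilde f + \partial_z \tilde f\big)(z+\bar{t_k},t)$ evaluated appropriately — more precisely, writing $f(z,t) = \tilde f(z+\bar{t_k},t)$, one gets $\partial_{\bar{t_k}}|_{z\text{ fixed}} = \partial_{t_k}|_{\text{tilde side}} + \partial_z|_{\text{tilde side}}$. Applying this to the dressing operator, the tilde-side equations $\partial_{t_k}\tilde W = -\tilde L^k_- \tilde W$ and $\partial_z \tilde W = (\tilde r\partial^{-1}\tilde q_x - \tilde q\partial^{-1}\tilde r_x)\tilde W$ add up to $\partial_{\bar{t_k}} W = -L^k_- W + (r\partial^{-1}q_x - q\partial^{-1}r_x)W$, which is exactly the dressing form of (\ref{extended:b}); the remaining flows $\partial_{t_n}$ ($n\neq k$) are unaffected by the shift (they commute with it since $z$ is not among the $t_n$), so (\ref{extended:a}), (\ref{extended:c}), (\ref{extended:d}) follow directly from the corresponding tilde-side equations. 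Conjugating the dressing equation by $W$ then gives the Lax form (\ref{extended:b}), and the constraint $L^* = -\partial L\partial^{-1}$ is preserved because it is preserved on the tilde side and $z$-shifts do not affect it.

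The main obstacle — really the only point requiring care — is making the change of variables rigorous at the level of pseudo-differential operators and the $\partial^{-1}$ that appears inside the residue identities: one must check that the shift $z\mapsto z-\bar{t_k}$ commutes with the formal operator $\partial = \partial_x$ and with the interpretation of $\partial^{-1}(q w_x)$ given in the statement of Theorem~\ref{res-pro}, so that the tilded identities are genuinely an instance of (\ref{res}) and not merely formally similar. Since the shift only involves $z$ and $\bar{t_k}$, neither of which equals $x=t_1$, this commutation is immediate, but it should be stated. Once that is observed, everything else is bookkeeping: invoke Proposition~\ref{zBKP}, apply the chain rule for $\partial_{\bar{t_k}}$, and read off each equation of (\ref{extended}). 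I would therefore write the proof as: ``Set $\tilde w(z,t,\lambda)=w(z-\bar{t_k},t,\lambda)$, etc. Then (\ref{res-extented}) is (\ref{res}) for the tilde quantities, so by Proposition~\ref{zBKP}, $\tilde L$, $\tilde q$, $\tilde r$ solve (\ref{BKP})--(\ref{extBKP}). Since $\partial_{\bar{t_k}} = \partial_{t_k} + \partial_z$ after the shift while $\partial_{t_n}$ ($n\neq k$) is unchanged, (\ref{extended}) follows.''
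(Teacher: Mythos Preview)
Your proposal is correct and rests on the same underlying idea as the paper's proof: the shift $z\mapsto z-\bar t_k$ separates the $\partial_{\bar t_k}$-flow into the ordinary $\partial_{t_k}$-flow and the ghost $\partial_z$-flow, so that the identities (\ref{res-extented}) become exactly (\ref{res}) for the shifted quantities. The difference is one of packaging. You invoke Proposition~\ref{zBKP} wholesale on the tilded objects and then translate back via the chain rule $\partial_{\bar t_k}=\partial_{t_k}+\partial_z$. The paper instead partially redoes the work: it cites \cite{DateRes} for the constraint $W^*\partial W=\partial$, reproves $W_z=(r\partial^{-1}q_x-q\partial^{-1}r_x)W$ as in Proposition~\ref{zBKP}, and then carries out a fresh residue computation to obtain $\frac{d}{d\bar t_k}W(z-\bar t_k,t)=-L^k_-W$ directly from (\ref{res-extented:a}), before adding the two dressing equations. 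Your route is more economical and makes the logical dependence on Proposition~\ref{zBKP} transparent; the paper's route is more self-contained for the $\bar t_k$-flow step. Your remark that the shift commutes with $\partial_x$ and with the $\partial^{-1}$ convention (since only $z$ and $\bar t_k$ are involved, not $x=t_1$) is exactly the check needed, and it is sound.
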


\begin{proof}
 %% In \cite{DateRes}
 It is already known \cite{DateRes} that
(\ref{res-extented:a}) implies the constraint  $W^*\partial W=\partial$
or equivalently $L^*=-\partial L\partial^{-1}$ (\ref{extended:a}).
We may define adjoint wave function of $w(z,t,\lambda)$ as $w^*(z,t,\lambda)=(W^{-1})^*\e^{-\xi(t,\lambda)}$,
then by $W^*\partial W=\partial$ we get $w^*(z,t,\lambda)=-\lambda^{-1}w_x(z,t,-\lambda)$.
Similar to the proof of Proposition \ref{zBKP}, we obtain
$W_z=(r\partial^{-1}q_x-q\partial^{-1}r_x)W$ from (\ref{res-extented:b}).

It can be derived that
$\frac{d}{d\bar{t_k}}w(z-\bar{t_k},t,\lambda)=\left( \frac{d}{d\bar{t_k}}W+L^kW \right)\e^{\xi(t,\lambda)}$
and
\begin{equation*}
\res_\lambda w(z-\bar{t_k},t,\lambda)\lambda^{-1}w_{x^{'}}(z-\bar{t_k}^{'},t^{'},-\lambda)
=-\res_\lambda w(z-\bar{t_k},t,\lambda)w^*(z-\bar{t_k}^{'},t^{'},\lambda)
=0.
\end{equation*}
Then from the coefficients of Taylor expansion of the above equation
we find that for any positive integer $m$
\begin{eqnarray*}
    0&=&\res_\lambda \frac{d}{d\bar{t_k}}w(z-\bar{t_k},t,\lambda)\cdot\partial^m w^*(z-\bar{t_k},t,\lambda)\\
    &=&\res_\lambda \left( \frac{d}{d\bar{t_k}}W+L^kW \right)\e^{\xi(t,\lambda)}\cdot\partial^m (W^{-1})^*\e^{-\xi(t,\lambda)}\\
    &=&\res_\partial \left( \frac{d}{d\bar{t_k}}W+L^kW \right)\cdot\left(\partial^m (W^{-1})^*\right)^*\\
    &=&\res_\partial \left( \frac{d}{d\bar{t_k}}W+L^k_-W \right)W^{-1}(-\partial)^m, \qquad (m>0)
\end{eqnarray*}
which means that $\frac{d}{d\bar{t_k}}W=-L^k_-W$.
Hence $\partial_{\bar{t_k}}W=-L_-^k W+(r \partial^{-1} q_x -q \partial^{-1} r_x)W$ holds, then (\ref{extended:b}) can be derived.

For (\ref{extended:c}) and (\ref{extended:d}), the proofs can be done by directly differentiating (\ref{res-extented:c}) and (\ref{res-extented:d}), since we have already known (\ref{extended:a}) and hence (\ref{propertiesofwtk}) follows.

\end{proof}

\section{$\tau$ function for the extended BKP hierarchy}
\label{tau functions and Hirota bilinear equations for extended BKP hierarchy}

The existence of $\tau$-function for original BKP hierarchy is proved in \cite{1983-Date-Jimbo-etal-BKP}. According to \cite{ZhangYJ} and  \cite{Lin2013},
 similar assumptions can be made
for the extended BKP hierarhcy (\ref{extended}), i.e.,
\begin{subequations}
\label{tau}
\begin{align}
&w(z-\bar{t_k},t,\lambda)=\frac{\tau(z-\bar{t_k}+\frac{2}{k\lambda^k},t-2[\lambda])}{\tau(z-\bar{t_k},t)}\cdot \e^{\xi(t,\lambda)},\label{tau-w}\\
%%lin &q(z-\bar{t_k},t)=\frac{\sigma(z-\bar{t_k},t)}{\tau(z-\bar{t_k},t)},\label{tau-q}\\
&q(z,t)=\frac{\sigma(z,t)}{\tau(z,t)},\label{tau-q}\\
%%lin &r(z-\bar{t_k},t)=\frac{\rho(z-\bar{t_k},t)}{\tau(z-\bar{t_k},t)},\label{tau-r}
&r(z,t)=\frac{\rho(z,t)}{\tau(z,t)},\label{tau-r}
\end{align}
where $[\lambda]=(\frac{1}{\lambda},\frac{1}{3\lambda^3},\frac{1}{5\lambda^5},\cdots)$.
\end{subequations}

Similar to \cite{Shen}, notice that $w^*(z-\bar{t_k},t,\lambda)=-\lambda^{-1}w_x(z-\bar{t_k},t,-\lambda)$,
we have
\begin{subequations}
\label{Phi}
\begin{align}
&\partial^{-1}( r(z-\bar{t_k},t)w^*(z-\bar{t_k},t,\lambda) )\nonumber\\
=&-\frac{\rho(z-\bar{t_k},t)\tau(z-\bar{t_k}-\frac{2}{k\lambda^k},t+2[\lambda])
+\rho(z-\bar{t_k},t+2[\lambda])\tau(z-\bar{t_k},t)}{2\lambda\tau^2(z-\bar{t_k},t)}\e^{-\xi(t,\lambda)},\\
&\partial^{-1}( q(z-\bar{t_k},t)w^*(z-\bar{t_k},t,\lambda) )\nonumber\\
=&-\frac{\sigma(z-\bar{t_k},t)\tau(z-\bar{t_k}-\frac{2}{k\lambda^k},t+2[\lambda])
+\sigma(z-\bar{t_k},t+2[\lambda])\tau(z-\bar{t_k},t)}{2\lambda\tau^2(z-\bar{t_k},t)}\e^{-\xi(t,\lambda)}.
\end{align}
\end{subequations}

Substituting (\ref{tau}) and (\ref{Phi}) into (\ref{res-extented}), one can get
\begin{subequations}
\label{barhirota}
\begin{align}
%%1
&\res_\lambda \lambda^{-1}\bar{\tau}(t-y-2[\lambda])\bar{\tau}(t+y+2[\lambda])\e^{\xi(-2y,\lambda)}=\bar{\tau}(t-y)\bar{\tau}(t+y) ,\\
%%2
&\bar{\sigma}(t-y)\bar{\rho}(t+y)-\bar{\rho}(t-y)\bar{\sigma}(t+y)=
\res_\lambda \lambda^{-1}\bar{\tau}_z(t-y-2[\lambda])\bar{\tau}(t+y+2[\lambda])\e^{\xi(-2y,\lambda)} \nonumber\\
&-\res_\lambda \lambda^{-1}\bar{\tau}(t-y-2[\lambda])(\partial_z \log\bar{\tau}(t-y))\bar{\tau}(t+y+2[\lambda])\e^{\xi(-2y,\lambda)},\\
%%3
&2\bar{\sigma}(t-y)\bar{\tau}(t+y)-\bar{\sigma}(t+y)\bar{\tau}(t-y)
=\res_\lambda \lambda^{-1}\bar{\tau}(t-y-2[\lambda])
\bar{\sigma}(t+y+2[\lambda]) \e^{\xi(-2y,\lambda)},\nonumber\\
\\
%%4
&2\bar{\rho}(t-y)\bar{\tau}(t+y)-\bar{\rho}(t+y)\bar{\tau}(t-y)
=\res_\lambda \lambda^{-1}\bar{\tau}(t-y-2[\lambda])
\bar{\rho}(t+y+2[\lambda]) \e^{\xi(-2y,\lambda)}.\nonumber\\
\end{align}
\end{subequations}
Here we define $\bar{f}(t):= f(z-\bar{t_k},t),$
therefore $\bar{f}(t-2[\lambda])= f(z-\bar{t_k}+\frac{2}{k\lambda^k},t-2[ %LR-2016 [
\lambda]),$
$\bar{f}(t+2[\lambda])= f(z-\bar{t_k}-\frac{2}{k\lambda^k},t+2[\lambda])$.

%%"Lin" In the following, we assume that all the summation indexes $n$ and $i$ are positive odd integers.
Let $t\rightarrow t-y,~t^{'}\rightarrow t+y$,
$y=(y_1,y_3,...)$,  and
introducing the Hirota's operator $D_n$ as in \cite{Hirota}:
$$D_n f(t)\circ g(t)=(\partial_{t_n}-\partial_{t_n^{'}})f(t)g(t^{'})|_{t^{'}_n=t_n},~~
\e^{aD_n}f(t)\circ g(t)=f(t_n+a)g(t_n-a),$$
we can write (\ref{barhirota}) %% (\ref{hirota})
as
\begin{subequations}
\label{mhirota}
\begin{align}
%%1
&\sum\limits_{j=0}p_j(-2y)p_j(2\tilde{D}) \e^{\sum\limits_{n=1}y_nD_n}
\bar{\tau}(t)\circ\bar{\tau}(t)
=\e^{\sum\limits_{i=1}y_iD_i}\bar{\tau}(t)\circ\bar{\tau}(t),\label{mhirota:a}\\
%%2
&\e^{\sum\limits_{i=1}y_iD_i}(\bar{\rho}(t)\circ\bar{\sigma}(t)-\bar{\sigma}(t)\circ\bar{\rho}(t))
=\sum\limits_{j=0}p_j(-2y)p_j(2\tilde{D}) \e^{\sum\limits_{n=1}y_nD_n}
\bar{\tau}(t)\circ\bar{\tau}_z(t) \nonumber\\
&-(\partial_z\log\bar{\tau}(t-y))\sum\limits_{j=0}p_j(-2y)p_j(2\tilde{D})
\e^{\sum\limits_{n=1}y_n D_n}\bar{\tau}(t)\circ\bar{\tau}(t),\label{mhirota:b}\\
%%3
&\e^{\sum\limits_{i=1}y_iD_i}(2\bar{\tau}(t)\circ\bar{\sigma}(t)-\bar{\sigma}(t)\circ\bar{\tau}(t))
=\sum\limits_{j=0}p_j(-2y)p_j(2\tilde{D}) \e^{\sum\limits_{n=1}y_n D_n}
\bar{\sigma}(t)\circ\bar{\tau}(t),\label{mhirota:c}\\
%%4
&\e^{\sum\limits_{i=1}y_iD_i}(2\bar{\tau}(t)\circ\bar{\rho}(t)-\bar{\rho}(t)\circ\bar{\tau}(t))
=\sum\limits_{j=0}p_j(-2y)p_j(2\tilde{D}) \e^{\sum\limits_{n=1}y_nD_n}
\bar{\rho}(t)\circ\bar{\tau}(t).\label{mhirota:d}
\end{align}
\end{subequations}
where $\tilde{D}=(D_1,\frac{D_3}{3},\frac{D_5}{5},\cdots)$,
 $p_j(t)$ are Schur polynomials defined as $p_j(t)=\sum\limits_{||\alpha||=j}\frac{t^\alpha}{\alpha !}$, whose generating function in general case is given by
    $\e^{\sum\limits_{i=1}^{\infty}t_i\lambda^i}
        =\sum\limits_{j=0}^{\infty}p_j(t)\lambda^j,$
where $\alpha=(\alpha_1,\alpha_3,\cdots),||\alpha||=\sum\limits_{i=1}^{\infty}i\alpha_i,\alpha !=\alpha_1 !\cdot\alpha_3 !\cdots,t^\alpha=t_1^{\alpha_1}t_3^{\alpha_3}\cdots$.

%%"Lin" For further analysis we have
%%"Lin" \begin{align*}
%%"Lin" &\sum\limits_{j=0}p_j(-2y)p_j(2\tilde{D}) \e^{\sum\limits_{n=1}y_nD_n}
%%"Lin" =\sum\limits_{j=0}\sum\limits_{||\alpha||=j}\frac{(-2y)^\alpha}{\alpha !}p_j(2\tilde{D})\cdot\sum\limits_{|\beta|=0}\frac{y^\beta}{\beta !}D^\beta\\
%%"Lin" =&\sum\limits_{||\alpha||=0,|\beta|=0}\frac{(-2y)^\alpha y^\beta}{\alpha ! \beta !}p_{||\alpha||}(2\tilde{D})D^\beta
%%"Lin" =\sum\limits_{\gamma=0} y^\gamma\cdot\sum\limits_{\alpha+\beta=\gamma}\frac{(-2)^{|\alpha|}}{\alpha ! \beta !}p_{||\alpha||}(2\tilde{D})D^\beta,
%%"Lin" \end{align*}
%%"Lin" and
%%"Lin" \begin{align*}
%%"Lin" &(\partial_z\log\bar{\tau}(t-y))\sum\limits_{j=0}p_j(-2y)p_j(2\tilde{D})
%%"Lin" \e^{\sum\limits_{n=1}y_n D_n}\bar{\tau}(t)\circ\bar{\tau}(t)\\
%%"Lin" =&\sum\limits_{\delta=0}\frac{y^\delta}{\delta !}\partial_y^\delta\partial_z\log\bar{\tau}(t-y)|_{y=0}
%%"Lin" \cdot \sum\limits_{\omega=0} y^\omega\cdot\sum\limits_{\alpha+\beta=\omega}\frac{(-2)^{|\alpha|}}{\alpha ! \beta !}p_{||\alpha||}(2\tilde{D})D^\beta\bar{\tau}(t)\circ\bar{\tau}(t)\\
%%"Lin" =&\sum\limits_{\gamma=0} y^\gamma\cdot\sum\limits_{\alpha+\beta+\delta=\gamma}\frac{(-2)^{|\alpha|}}{\alpha ! \beta ! \delta !}
%%"Lin" (\partial_y^\delta\partial_z\log\bar{\tau}(t-y)|_{y=0})p_{||\alpha||}(2\tilde{D})D^\beta\bar{\tau}(t)\circ\bar{\tau}(t),
%%"Lin" \end{align*}
%%"Lin" where $\partial_y^\delta=\partial_{y_1}^{\delta_1}\partial_{y_2}^{\delta_2}\cdots$.
Then (\ref{mhirota}) takes the following form by comparing the powers of $y$
\begin{subequations}
\label{mhirota-y}
\begin{align}
%%1
&\frac{D^\gamma}{\gamma !}\bar{\tau}(t)\circ\bar{\tau}(t)
=\sum\limits_{\alpha+\beta=\gamma}\frac{(-2)^{|\alpha|}}{\alpha ! \beta !}p_{||\alpha||}(2\tilde{D})D^\beta
\bar{\tau}(t)\circ\bar{\tau}(t), \label{mhirota-y:a}\\
%%2
&\frac{D^\gamma}{\gamma !}(\bar{\rho}(t)\circ\bar{\sigma}(t)-\bar{\sigma}(t)\circ\bar{\rho}(t))
=\sum\limits_{\alpha+\beta=\gamma}\frac{(-2)^{|\alpha|}}{\alpha ! \beta !}p_{||\alpha||}(2\tilde{D})D^\beta
\bar{\tau}(t)\circ\bar{\tau}_z(t) \nonumber\\
&-\sum\limits_{\alpha+\beta+\delta=\gamma}\frac{(-2)^{|\alpha|}}{\alpha ! \beta ! \delta !}
(\partial_y^\delta\partial_z\log\bar{\tau}(t-y)|_{y=0})p_{||\alpha||}(2\tilde{D})D^\beta\bar{\tau}(t)\circ\bar{\tau}(t),\label{mhirota-y:b}\\
%%3
&\frac{D^\gamma}{\gamma !}(2\bar{\tau}(t)\circ\bar{\sigma}(t)-\bar{\sigma}(t)\circ\bar{\tau}(t))
=\sum\limits_{\alpha+\beta=\gamma}\frac{(-2)^{|\alpha|}}{\alpha ! \beta !}p_{||\alpha||}(2\tilde{D})D^\beta
\bar{\sigma}(t)\circ\bar{\tau}(t), \label{mhirota-y:c}\\
%%4
&\frac{D^\gamma}{\gamma !}(2\bar{\tau}(t)\circ\bar{\rho}(t)-\bar{\rho}(t)\circ\bar{\tau}(t))
=\sum\limits_{\alpha+\beta=\gamma}\frac{(-2)^{|\alpha|}}{\alpha ! \beta !}p_{||\alpha||}(2\tilde{D})D^\beta
\bar{\rho}(t)\circ\bar{\tau}(t). \label{mhirota-y:d}
\end{align}
\end{subequations}

%%"Lin" The zero-th order term in (\ref{mhirota-y:b}) (with $\gamma=(0,0,\cdots)$,
%%"Lin" or equivalently $y_j=0,\forall j$) is a trivial equation.
\begin{rmk}
%%"Lin" \begin{align*}
%%"Lin" &\sum\limits_{\alpha+\beta+\delta=\gamma}\frac{(-2)^{|\alpha|}}{\alpha ! \beta ! \delta !}
%%"Lin" (\partial_y^\delta\partial_z\log\bar{\tau}(t-y)|_{y=0})p_{||\alpha||}(2\tilde{D})D^\beta\bar{\tau}(t)\circ\bar{\tau}(t)\\
%%"Lin" &=-2(\partial_z\log\bar{\tau}(t))p_1(2\tilde{D})\bar{\tau}(t)\circ\bar{\tau}(t)
%%"Lin" +(\partial_z\log\bar{\tau}(t))D_1\bar{\tau}(t)\circ\bar{\tau}(t)\\
%%"Lin" &+(\partial_{y_1}\partial_z\log\bar{\tau}(t-y)|_{y=0})\bar{\tau}(t)\circ\bar{\tau}(t)\\
%%"Lin" &=\partial_{y_1}\frac{\bar{\tau}_z(t-y)}{\bar{\tau}(t-y)}|_{y=0}\bar{\tau}(t)\circ\bar{\tau}(t)\\
%%"Lin" &=\frac{\bar{\tau}_{zy_1}(t-y)\bar{\tau}(t-y)-\bar{\tau}_z(t-y)\bar{\tau}_{y_1}(t-y)}{\bar{\tau}(t-y)\bar{\tau}(t-y)}|_{y=0}
%%"Lin" \bar{\tau}(t)\circ\bar{\tau}(t)\\
%%"Lin" &=\frac{-\bar{\tau}_{zt_1}(t-y)\bar{\tau}(t-y)+\bar{\tau}_z(t-y)\bar{\tau}_{t_1}(t-y)}{\bar{\tau}(t-y)\bar{\tau}(t-y)}|_{y=0}
%%"Lin" \bar{\tau}(t)\circ\bar{\tau}(t)\\
%%"Lin" &=\bar{\tau}_x(t)\bar{\tau}_z(t)-\bar{\tau}(t)\bar{\tau}_{zx}(t)=D_x\bar{\tau}(t)\circ\bar{\tau}_z(t),
%%"Lin" \end{align*}
For the case $\gamma=(1,0,0,\cdots)$,
the term with $y=(y_1,0,0,\cdots)$ in
(\ref{mhirota-y:b}) can be written in the following form
$$2D_x\tau_z\circ\tau+D_x\sigma\circ\rho=2D_z\tau_x\circ\tau+D_x\sigma\circ\rho
=D_xD_z\tau\circ\tau+D_x\sigma\circ\rho=0.$$
\end{rmk}

\begin{exm} { [The first type of
(2+1)-dimensional Sawada-Kotera equation with a self-consistent source (2d-SKwS-I)
\cite{Hu,Wu}, i.e., the extended BKP hierarchy (\ref{extended}) with $n=3$ and  $k=5$]
}

Notice that from the definition of $\bar{\tau}$, we know that
$D_{\bar{t_5}}\bar{\tau}\circ\bar{\tau}
=(D_{\bar{t_5}}-D_z)\tau\circ\tau|_{z=z-\bar{t_k}}$.
%%"Lin" For $n=3,k=5$,
%%"Lin" the lowest-order term of nontrivial equation (\ref{mhirota-y:a}) is
%%"Lin" \begin{equation*}
%%"Lin" \gamma=(1,0,1,0,\cdots):~[D_x^6-5D_x^3D_{t_3}+9D_x(D_{\bar{t_5}}-D_z)-5D_{t_3}^2]\tau\circ\tau=0.
%%"Lin" \end{equation*}
%%"Lin" The first pair of nontrivial equations for (\ref{mhirota-y:c}) and (\ref{mhirota-y:d}) are
%%"Lin" \begin{equation*}
%%"Lin" \gamma=(0,1,0,0,\cdots):~(D_x^3-D_{t_3})\sigma\circ\tau=0,~~(D_x^3-D_{t_3})\rho\circ\tau=0.
%%"Lin" \end{equation*}
Therefore, the Hirota's bilinear equations for the 2d-SKwS-I ((\ref{extended}) with $n=3$ and  $k=5$) can be obtained as
\begin{subequations}
\label{Hirota-eq:2d-SKwS-I}
\begin{align}
    &D_xD_z\tau\circ\tau+D_x\sigma\circ\rho=0,
    &\text{by (\ref{mhirota-y:b}) in $y_1$,} \label{Hirota-eq:2d-SKwS-I-a}\\
    &\left[D_x^6-5D_x^3D_{t_3}+9D_x(D_{\bar{t_5}}-D_z)-5D_{t_3}^2\right]
    \tau\circ\tau=0,
    &\text{by (\ref{mhirota-y:a}) in $y_1y_5$,} \label{Hirota-eq:2d-SKwS-I-b}\\
    &(D_x^3-D_{t_3})\sigma\circ\tau=0,
    & \text{by (\ref{mhirota-y:c}) in $y_3$,}\\
    &(D_x^3-D_{t_3})\rho\circ\tau=0,
    & \text{by (\ref{mhirota-y:d}) in $y_3$,}.
\end{align}
\end{subequations}
\end{exm}

\begin{exm}{ [The second type of
(2+1)-dimensional Sawada-Kotera equation with a self-consistent source (2d-SKwS-II)
\cite{Wu}, i.e., the extended BKP hierarchy (\ref{extended}) with $n=5$ and $k=3$]
}

The Hirota's blilinear equations for the 2d-SKwS-II ((\ref{extended}) with $n=5$ and  $k=3$) can be obtained as
\begin{subequations}
\label{Hirota-eq:2d-SKwS-II}
\begin{align}
%%"Lin" &\gamma=(1,0,0,0,\cdots)~\mathrm{in}~(\ref{mhirota-y:b}):
& D_xD_z\tau\circ\tau+D_x\sigma\circ\rho=0,
&\text{by (\ref{mhirota-y:b}) in $y_1$,} \\
%%"Lin" &\gamma=(1,0,1,0,\cdots)~\mathrm{in}~(\ref{mhirota-y:a}):\\
&[D_x^6-5D_x^3(D_{\bar{t_3}}-D_z)+9D_xD_{t_5}-5(D_{\bar{t_3}}-D_z)^2]\tau\circ\tau=0,
&\text{by (\ref{mhirota-y:a}) in $y_1y_5$,}\\
%%"Lin" &\gamma=(0,1,0,0,\cdots)~\mathrm{in}~(\ref{mhirota-y:c}):
&[D_x^3-(D_{\bar{t_3}}-D_z)]\sigma\circ\tau=0,
&\text{by (\ref{mhirota-y:c}) in $y_3$,}\\
%%"Lin" &\gamma=(0,1,0,0,\cdots)~\mathrm{in}~(\ref{mhirota-y:d}):
&[D_x^3-(D_{\bar{t_3}}-D_z)]\rho\circ\tau=0,
&\text{by (\ref{mhirota-y:d}) in $y_3$,}\\
%%"Lin" &\gamma=(0,0,1,0,\cdots)~\mathrm{in}~(\ref{mhirota-y:c}):
&[D_x^5+5D_x^2(D_{\bar{t_3}}-D_z)-6D_{t_5}]\sigma\circ\tau=0,
&\text{by (\ref{mhirota-y:c}) in $y_5$,}\\
%%"Lin" &\gamma=(0,0,1,0,\cdots)~\mathrm{in}~(\ref{mhirota-y:d}):
&[D_x^5+5D_x^2(D_{\bar{t_3}}-D_z)-6D_{t_5}]\rho\circ\tau=0,
&\text{by (\ref{mhirota-y:d}) in $y_5$.}
\end{align}
\end{subequations}
%%"Lin" \begin{subequations}
%%"Lin" \label{Hirota-eq:2d-SKwS-II}
%%"Lin" \begin{align}
%%"Lin" &\gamma=(1,0,0,0,\cdots)~\mathrm{in}~(\ref{mhirota-y:b}):D_xD_z\tau\circ\tau+D_x\sigma\circ\rho=0,\\
%%"Lin" &\gamma=(0,1,0,0,\cdots)~\mathrm{in}~(\ref{mhirota-y:c}):[D_x^3-(D_{\bar{t_3}}-D_z)]\sigma\circ\tau=0,\\
%%"Lin" &\gamma=(0,1,0,0,\cdots)~\mathrm{in}~(\ref{mhirota-y:d}):[D_x^3-(D_{\bar{t_3}}-D_z)]\rho\circ\tau=0,\\
%%"Lin" &\gamma=(0,0,1,0,\cdots)~\mathrm{in}~(\ref{mhirota-y:c}):[D_x^5+5D_x^2(D_{\bar{t_3}}-D_z)-6D_{t_5}]\sigma\circ\tau=0,\\
%%"Lin" &\gamma=(0,0,1,0,\cdots)~\mathrm{in}~(\ref{mhirota-y:d}):[D_x^5+5D_x^2(D_{\bar{t_3}}-D_z)-6D_{t_5}]\rho\circ\tau=0,\\
%%"Lin" &\gamma=(1,0,1,0,\cdots)~\mathrm{in}~(\ref{mhirota-y:a}):\\
%%"Lin" &[D_x^6-5D_x^3(D_{\bar{t_3}}-D_z)+9D_xD_{t_5}-5(D_{\bar{t_3}}-D_z)^2]\tau\circ\tau=0.
%%"Lin" \end{align}
%%"Lin" \end{subequations}
\end{exm}

\section{Back to Nonlinear Equations from Hirota's Bilinear Equations}
\label{Transformation from Hirota's bilinear to nonlinear form}

Following the way given in \cite{Hirota}, we can convert the bilinear equations back to nonlinear PDEs.
Consider the following identities, which are easy to prove
\begin{subequations}
\label{back}
\begin{align}
&\e^{\sum\limits_i \delta_iD_i}\rho\circ\tau=\e^{2\cosh(\sum\limits_i \delta_i\partial_i)\log\tau}\e^{\sum\limits_i\delta_i\partial_i}(\rho/\tau),\\
&\cosh(\sum\limits_i \delta_iD_i)\tau\circ\tau=\e^{2\cosh(\sum\limits_i \delta_i\partial_i)\log\tau}.
\end{align}
\end{subequations}

Using transformations $u:=2\partial_x^2\log\tau$, $r:=\rho/\tau$, $q:=\sigma/\tau$
and expanding (\ref{back}) with respect to $\delta$, we can get the relation between the Hirota bilinear equations
and the usual nonlinear form. Here we give two examples.

\begin{exm}
The Hirota's bilinear equations (\ref{Hirota-eq:2d-SKwS-I})
can be translated back to a nonlinear PDEs as
\begin{align*}
&\partial^{-1}\partial_z u +q_x r-q r_x=0,\\
%% &u^{(5)}+15u^{'}u^{(2)}+15uu^{(3)}+45u^2u^{'}-5u^{(2)}_{t_3}-15u^{'}(\partial^{-1}u_{t_3})-15uu_{t_3}\\
%% &+9u_{\bar{t_5}}-9u_z-5\partial^{-1}u_{t_3t_3}=0,\\
&u^{(5)}+15u_{x}u^{(2)}+15uu^{(3)}+45u^2u_{x}-5u^{(2)}_{t_3}
    -15u_{x}(\partial^{-1}u_{t_3})-15uu_{t_3}\nonumber\\
&+9u_{\bar{t_5}}-9u_z-5\partial^{-1}u_{t_3t_3}=0,\\
&q_{t_3}=q_{xxx}+3uq_x,\\
&r_{t_3}=r_{xxx}+3ur_x.
\end{align*}
After eliminating variable $z$ from above equations, we get the first type of
(2+1)-dimensional Sawada-Kotera equation with a self-consistent source (2d-SKwS-I) \cite{Hu,Wu}
\begin{subequations}
\label{eq:2d-SKwS-I}
\begin{align}
%% &\partial^{-1}\partial_z u +q_x r-q r_x=0,\\ %%lin: keep it or not??
%% &u_{\bar{t_5}}+\frac{1}{9}u^{(5)}+\frac{5}{3}u^{'}u^{(2)}
%% +\frac{5}{3}uu^{(3)}+5u^2u^{'}-\frac{5}{9}u^{(2)}_{t_3}-\frac{5}{3}u^{'}(\partial^{-1}u_{t_3})-\frac{5}{3}uu_{t_3}\\
%% &-\frac{5}{9}\partial^{-1}u_{t_3t_3}+q_{xx}r-qr_{xx}=0,\\
&u_{\bar{t_5}}+\frac{1}{9}u^{(5)}+\frac{5}{3}u_{x}u^{(2)}
 +\frac{5}{3}uu^{(3)}+5u^2u_{x}-\frac{5}{9}u^{(2)}_{t_3}
 -\frac{5}{3}u_{x}(\partial^{-1}u_{t_3})-\frac{5}{3}uu_{t_3}\nonumber\\
 &-\frac{5}{9}\partial^{-1}u_{t_3t_3}+q_{xx}r-qr_{xx}=0,\\
&q_{t_3}=q_{xxx}+3uq_x,\\
&r_{t_3}=r_{xxx}+3ur_x.
\end{align}
\end{subequations}
The Hirota's bilinear equations for the 2d-SKwS-I (\ref{eq:2d-SKwS-I}) is given by (\ref{Hirota-eq:2d-SKwS-I}), which coincide with the form given in \cite{Hu}
after eliminating variable $z$ by (\ref{Hirota-eq:2d-SKwS-I-a}) and (\ref{Hirota-eq:2d-SKwS-I-b}).
\end{exm}

\begin{exm}
The Hirota's bilinear equations (\ref{Hirota-eq:2d-SKwS-II}) can be translated back to a nonlinear PDEs as
\begin{align*}
&\partial^{-1}\partial_z u +q_x r-q r_x=0,\\
%% &u^{(5)}+15u^{'}u^{(2)}+15uu^{(3)}+45u^2u^{'}-5u^{(2)}_{\bar{t_3}}-15u^{'}(\partial^{-1}u_{\bar{t_3}})-15uu_{\bar{t_3}}\\
%% &+9u_{t_5}+5u_z^{(2)}+15u^{'}\partial^{-1}u_z+15uu_z-5\partial^{-1}(\partial_{\bar{t_3}}-\partial_z)^2u=0,\\
&u^{(5)}+15u_{x}u^{(2)}+15uu^{(3)}+45u^2u_{x}-5u^{(2)}_{\bar{t_3}}
    -15u_{x}(\partial^{-1}u_{\bar{t_3}})-15uu_{\bar{t_3}}\\
&+9u_{t_5}+5u_z^{(2)}+15u_{x}\partial^{-1}u_z+15uu_z-5\partial^{-1}(\partial_{\bar{t_3}}-\partial_z)^2u=0,\\
&q_{\bar{t_3}}-q_z=q_{xxx}+3uq_x,\\
&r_{\bar{t_3}}-r_z=r_{xxx}+3ur_x,\\
&q^{(5)}+10uq^{(3)}+5(u_{xx}+3u^2)q_x+5(q_{\bar{t_3}}-q_z)^{(2)}+5u(q_{\bar{t_3}}-q_z)\\
&+10q_x\partial^{-1}u_{\bar{t_3}}-10q_x\partial^{-1}u_z-6q_{t_5}=0,\\
&r^{(5)}+10ur^{(3)}+5(u_{xx}+3u^2)r_x+5(r_{\bar{t_3}}-r_z)^{(2)}+5u(r_{\bar{t_3}}-r_z)\\
&+10r_x\partial^{-1}u_{\bar{t_3}}-10r_x\partial^{-1}u_z-6r_{t_5}=0.
\end{align*}
After eliminating variable $z$ and $q_{\bar{t_3}},r_{\bar{t_3}}$,
we get the second type of
(2+1)-dimensional Sawada-Kotera equation with a self-consistent source (2d-SKwS-II) \cite{Wu}
\begin{subequations}
\label{eq:2d-SKwS-II}
\begin{align}
%% &\partial^{-1}\partial_z u +q_x r-q r_x=0,\\ %%lin: keep it or not??
%% &u_{t_5}+\frac{1}{9}u^{(5)}+\frac{5}{3}u^{'}u^{(2)}
%% +\frac{5}{3}uu^{(3)}+5u^2u^{'}-\frac{5}{9}u^{(2)}_{\bar{t_3}}
%% -\frac{5}{3}u^{'}(\partial^{-1}u_{\bar{t_3}})-\frac{5}{3}uu_{\bar{t_3}}
%% -\frac{5}{9}\partial^{-1}u_{\bar{t_3}\bar{t_3}}\\
%% &=\frac{1}{9}[10q^{(4)}r+5q^{(3)}r^{'}-5q^{'}r^{(3)}-10qr^{(4)}+5(q^{'}r-qr^{'})_{\bar{t_3}}\\
%% &+30u(q^{(2)}r-qr^{(2)})+30u^{'}(q^{'}r-qr^{'})],\\
&u_{t_5}+\frac{1}{9}u^{(5)}+\frac{5}{3}u_{x}u^{(2)}
 +\frac{5}{3}uu^{(3)}+5u^2u_{x}-\frac{5}{9}u^{(2)}_{\bar{t_3}}
 -\frac{5}{3}u_{x}(\partial^{-1}u_{\bar{t_3}})-\frac{5}{3}uu_{\bar{t_3}}
 -\frac{5}{9}\partial^{-1}u_{\bar{t_3}\bar{t_3}}\nonumber\\
 &=\frac{1}{9}[10q^{(4)}r+5q^{(3)}r_{x}-5q_{x}r^{(3)}
    -10qr^{(4)}+5(q_{x}r-qr_{x})_{\bar{t_3}}\nonumber\\
 &+30u(q^{(2)}r-qr^{(2)})+30u_{x}(q_{x}r-qr_{x})],\\
&q_{t_5}=q^{(5)}+5uq^{(3)}+5u^{'}q^{(2)}+[\frac{10}{3}u^{(2)}+5u^2+\frac{5}{3}\partial^{-1}u_{\bar{t_3}}+\frac{5}{3}(q_xr-qr_x)]q_x,\\
&r_{t_5}=r^{(5)}+5ur^{(3)}+5u^{'}r^{(2)}+[\frac{10}{3}u^{(2)}+5u^2+\frac{5}{3}\partial^{-1}u_{\bar{t_3}}+\frac{5}{3}(q_xr-qr_x)]r_x.
\end{align}
\end{subequations}
The Hirota's bilinear equations for the 2d-SKwS-II (\ref{eq:2d-SKwS-II}) is given by (\ref{Hirota-eq:2d-SKwS-II}). To the best of our knowledge,
the Hirota's bilinear form (\ref{Hirota-eq:2d-SKwS-II}) for (\ref{eq:2d-SKwS-II})
 did not appear in the literatures before.
\end{exm}

\section{Conclusion and discussions}

In this paper, by introducing an auxiliary $\partial_z-$flow,
we constructed the bilinear identities (\ref{res-extented}) for the extended BKP hierarchy (\ref{extended}) introduced in \cite{Wu}. The bilinear identities (\ref{res-extented}) are used to generate all the Hirota's bilinear equations
for the extended BKP hierarchy (\ref{extended}).
As examples, the bilinear forms for
the two types of (2+1)-dimensional Sawada-Kotera equation with a self-consistent source (2d-SKwS-I and 2d-SKwS-II) are derived.
The correctness of these bilinear forms are affirmed
by translating the bilinear equations back to the nonlinear PDEs.
The Hirota's bilinear equations for the 2d-SKwS-II is given explicitly,
which did not appear in the literatures before .

It is a very interesting problem to consider the quasi-periodic solutions for the extended BKP hierarchy as we have already obtained its bilinear identities. Another interesting problem is  to consider the bilinear identities for some other extended hierarchies, such as 2D Toda, discrete KP, etc. We will investigate these problems in future.

\section*{Acknowledgement}
This work is supported by National Natural Science Foundation of China (11471182, 11171175, 11201477).

\bigskip

\end{document}